\theoremstyle{plain}
\newtheorem{theorem}{Theorem}
\newtheorem{lemma}[theorem]{Lemma}
\newtheorem{corollary}[theorem]{Corollary}
\theoremstyle{definition}
\newtheorem{definition}[theorem]{Definition}
\newenvironment{keyword}{\centerline{\bf\small
Keywords}\begin{quote}\small}{\par\end{quote}\vskip 1ex}
\begin{document}

\title{
\vskip 2mm\bf\Large\hrule height5pt \vskip 4mm
(Non-)Equivalence of Universal Priors
\vskip 4mm \hrule height2pt}
\author{{\bf Ian Wood}$^1$ and {\bf Peter Sunehag}$^1$ and {\bf Marcus Hutter}$^{1,2}$\\[3mm]
\normalsize Research School of Computer Science \\[-0.5ex]
\normalsize $^1$Australian National University and $^2$ETH Z{\"u}rich \\[-0.5ex]
\normalsize\texttt{\{ian.wood,peter.sunehag,marcus.hutter\}@anu.edu.au}
}
\date{15 November 2011}

\maketitle              

\begin{abstract}
Ray Solomonoff invented the notion of universal induction
featuring an aptly termed ``universal'' prior probability
function over all possible computable environments
\cite{Solomonoff:64}. The essential property of this
prior was its ability to dominate all other such priors.
Later, Levin introduced another construction --- a mixture of
all possible priors or ``universal mixture''
\cite{Zvonkin:70}. These priors are well known to be
equivalent up to multiplicative constants. Here, we seek to
clarify further the relationships between these three
characterisations of a universal prior (Solomonoff's, universal
mixtures, and universally dominant priors). We see that the the
constructions of Solomonoff and Levin define an identical class
of priors, while the class of universally dominant priors is
strictly larger. We provide some characterisation of the
discrepancy.
\def\contentsname{\centering\normalsize Contents}
{\parskip=-2.7ex\tableofcontents}
\end{abstract}

\begin{keyword}
algorithmic information theory;
universal induction;
universal prior.
\end{keyword}

\newpage
\section{Introduction}\label{sec:3-universal-priors}

In the study of universal induction, we consider an abstraction
of the world in the form of a binary string. Any sequence from
a finite set of possibilities can be expressed in this way, and
that is precisely what contemporary computers are capable of
analysing. An ``environment'' provides a measure of probability
to (possibly infinite) binary strings. Typically, the class
$\mathcal{M}$ of enumerable semimeasures is considered. Given
the equivalence between $\mathcal{M}$ and the set of monotone
Turing machines (Lemma \ref{lem:measures-TM}), this choice
reflects the expectation that the environment can be computed
by (or at least approximated by) a Turing machine.

Universal induction is an ideal Bayesian induction mechanism
assigning probabilities to possible continuations of a binary
string. In order to do this, a prior distribution, termed a
universal prior, is defined on binary strings. This prior has
the property that the Bayesian mechanism converges to the true
(generating) environment for \textit{any} environment $\mu$ in
$\mathcal{M}$, given sufficient evidence.

There are three popular ways of defining a universal prior in
the literature: Solomonoff's prior
\cite{Solomonoff:64,Zvonkin:70,Hutter:04uaibook},
\label{def:u-prior-Solomonofxf} as a universal mixture
\cite{Zvonkin:70,Hutter:04uaibook,Hutter:07uspx}, or
\label{def:u-prior-mixturex} a universally dominant semimeasure
\cite{Hutter:04uaibook,Hutter:07uspx}.
\label{def:u-prior-dominantx} Briefly, a universally dominant
semimeasure is one that dominates every other semimeasure in
$\mathcal{M}$ (Definition \ref{def:u-prior-dominant}), a
universal mixture is a mixture of all semimeasures in
$\mathcal{M}$ with non-zero coefficients (Definition
\ref{def:u-prior-mixture}), and a Solomonoff prior assigns the
probability that a (chosen) monotone universal Turing machine
outputs a string given random input (Definition
\ref{def:u-prior-Solomonof}). These and other relevant concepts
are defined in more detail in Section \ref{sec:definitions}.

Solomonoff's and the universal mixture constructions have been
known for many years and they are often used interchangeably in
textbooks and lecture notes. Their equivalence has been shown
in the sense that they dominate each other
\cite{Zvonkin:70,Hutter:04uaibook,Li:08}. We extend this result
in Section \ref{sec:UTM-is-mixture}, showing that they in fact
define exactly the same class of priors.

Further, it is trivial to see that both constructions produce
universally dominant semimeasures. The converse is, however,
not true. Universally dominant semimeasures are a larger class.
We provide a simple example to demonstrate this in Section
\ref{sec:dominant-is-not-universal}.

These results are relatively undemanding technically, however
given their fundamental nature,
that they have not to our knowledge been published to date, and
the relevance to Ray Solomonoff's famous work on universal
induction, we present them here.

The following diagram summarises these inclusion relations:

\begin{figure}
\begin{displaymath}
    \xymatrix{
&\text{Universally Dominant} \ar@/_/@{<-}[ddl]_{Lemma \ref{lem:u-mix-is-dominant}} \ar@/^/@{.x}[ddl]^{Theorem\; \ref{thm:dom-not-mixture}}&\\
&&\\
\text{Universal Mixture}\ar@/_/@{<->}[rr]^{Theorem\; \ref{thm:UTM-eq-mixture}}
  && \text{Solomonoff Prior}  \ar@/_/[uul]_{Corollary\; \ref{corol:UTM-is-dominant}}
  }
\end{displaymath}
\caption{}
\end{figure}

\section{Definitions}\label{sec:definitions}

We represent the set of finite/infinite binary strings as
$\mathbb{B}^*$ and $\mathbb{B}^\infty$ respectively. $\epsilon$
denotes the empty string, $xb$ the concatenation of strings $x$
and $b$,  $\ell(x)$ the length of a string $x$. A cylinder set,
the set of all infinite binary strings which start with some
$x\in\mathbb{B}^*$ is denoted $\Gamma_x$.

A string $x$ is said to be a prefix of a string $y$ if $y=xz$
for some string $z$. We write $x\sqsubseteq y$ or $x\sqsubset
y$ if $x$ is a proper substring of $y$ (ie: $z\ne\epsilon$). We
denote  the maximal prefix-free subset of a set of finite
strings $\mathcal{P}$ by $\lfloor \mathcal{P} \rfloor$. It can
be obtained by successively removing elements that have a
prefix in $\mathcal{P}$. The uniform measure of a set of
strings is denoted $| \mathcal{P}
|:=\sum_{p\in\lfloor\mathcal{P}\rfloor}2^{-\ell(p)}$. This is
the area of continuations of elements of $\mathcal{P}$
considered as binary decimal numbers.

There have been several definitions of monotone Turing machines
in the literature \cite{Li:08}, however we choose that which is
now widely accepted
\cite{Solomonoff:64,Zvonkin:70,Hutter:04uaibook,Li:08}
and has the useful and intuitive property Lemma \ref{lem:measures-TM}.

\begin{definition}
A monotone Turing machine is a computer with binary (one-way)
input and output tapes, a bidirectional binary work tape (with
read/write heads as appropriate) and a finite state machine to
determine its actions given input and work tape values. The
input tape is read-only, the output tape is write-only.
\end{definition}

The definitions of a universal Turing machine in the literature
are somewhat varied or unclear. Monotone universal Turing
machines are relevant here for defining the Solomonoff prior.
In the algorithmic information theory literature, most authors
are concerned with the explicit construction of a single
reference universal machine
\cite{Hutter:04uaibook,Li:08,Solomonoff:64,Turing:36,Zvonkin:70}.
A more general definition is left to a relatively vague
statement along the lines of ``a Turing machine that can
emulate any other Turing machine''. The definition below
reflects the typical construction used and is often referred to
as \textit{universal by adjunction}
\cite{Downey:10book,Figueira:06}.

\begin{definition}[Monotone Universal Turing Machine]\label{def:UTM}
A monotone universal Turing machine is a monotone Turing
machine $U$ for which there exist:
\begin{enumerate}
\item an enumeration $\{T_i:i\in\mathbb{N}\}$ of all monotone Turing machines
\item a computable uniquely decodable self-delimiting code $I:\mathbb{N}\rightarrow\mathbb{B}^*$
\end{enumerate}
such that the programs for $U$ that produce output coincide
with the set $\{I(i)p:i\in\mathbb{N},\;p\in\mathbb{B}^*\}$ of
concatenations of $I(i)$ and $p$, and
\[
  U(I(i)p) = T_i(p)\quad\forall\, i\in\mathbb{N}\;,\;p\in\mathbb{B}^*
\]
\end{definition}

A key concept in algorithmic information theory is the
assignment of probability to a string $x$ as the probability
that some monotone Turing machine produces output beginning
with $x$ given unbiased coin flip input. This approach was used
by Solomonoff to construct a universal prior
\cite{Solomonoff:64}. To better understand the properties of
such a function, we will need the concepts of enumerability
and semimeasures:

\begin{definition}
A function or number $\phi$ is said to be
\textbf{\emph{enumerable}} or \textbf{\emph{lower
semicomputable}} (these terms are synonymous) if it can be
approximated from below (pointwise) by a monotone increasing
set $\{\phi_i:i\in\mathbb{N}\}$ of finitely computable
functions/numbers, all calculable by a single Turing machine.
We write $\phi_i\nearrow\phi$. Finitely computable
functions/numbers can be computed in finite time by a Turing
machine.
\end{definition}

\begin{definition}
A \textbf{\emph{semimeasure}} is a ``defective'' probability
measure on the $\sigma$-algebra generated by cylinder sets in
$\mathbb{B}^\infty$. We write $\mu(x)$ for $x\in\mathbb{B}^*$
as shorthand for $\mu(\Gamma_x)$. A probability measure must
satisfy $\mu(\epsilon)=1$,
$\mu(x)=\sum_{b\in\mathbb{B}}\mu(xb)$. A semimeasure allows a
probability ``gap'': $\mu(\epsilon)\le1$ and
$\mu(x)\ge\sum_{b\in\mathbb{B}}\mu(xb)$. $\mathcal{M}$ denotes
the set of all enumerable semimeasures.
\end{definition}

The following definition explicates the relationship between
monotone Turing machines and enumerable semimeasures.

\begin{definition}[Solomonoff semimeasure]
\label{def:lambda_T}
For each monotone Turing machine $T$ we associate a semimeasure
\[
 \lambda_T(x) := \sum_{\lfloor p:T(p)=x*\rfloor}2^{-\ell(p)} = |T^{-1}(x*)|
\]
where $\lfloor \mathcal{P} \rfloor$ indicates the maximal
prefix-free subset of a set of finite strings $\mathcal{P}$,
$T(p)=x*$ indicates that $x$ is a prefix of (or equal to)
$T(p)$ and $\ell(p)$ is the length of $p$.
If there are no such programs, we set $\lambda_T(x):=0$. [See
\cite{Li:08} definition 4.5.4]
\end{definition}

Note that this is the probability that $T$ outputs a string
starting with $x$ given unbiased coin flip input. To see this,
consider the uniform measure given by
$\lambda(\Gamma_p):=2^{-\ell(p)}$. This is the probability of
obtaining $p$ from unbiased coin flips. $\lambda_T(x)$ is the
uniform measure of the set of programs for $T$ that produce
output starting with $x$, ie: the probability of obtaining one
of those programs from unbiased coin flips.  Note also that,
since $T$ is monotone, this set consists of a union of disjoint
cylinder sets $\{\Gamma_p:p\in\lfloor q:T(q)=x*\rfloor\}$. By
dovetailing a search for such programs and an lower
approximation of the uniform measure $\lambda$, we can see that
$\lambda_T$ is enumerable. See Definition 4.5.4 (p.299) and
Lemma 4.5.5 (p.300) in \cite{Li:08}.

An important lemma in this discussion establishes the
equivalence between the set of all monotone Turing machines and
the set $\mathcal{M}$ of all enumerable semimeasures. It is
equivalent to Theorem 4.5.2 in \cite{Li:08} (page 301) with a
small correction: $\lambda_T(\epsilon)=1$ for any $T$ by
construction, but $\mu(\epsilon)$ may not be $1$, so this case
must be excluded.

\begin{lemma}\label{lem:measures-TM}
A semimeasure $\mu$ is lower semicomputable if and only if
there is a monotone Turing machine $T$ such that
$\mu=\lambda_T$ except on $\Gamma_\epsilon \equiv
\mathbb{B}^\infty$ and $\mu(\epsilon)$ is lower semicomputable.
\end{lemma}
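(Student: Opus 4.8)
The plan is to prove the two implications separately. The direction ``if $\mu$ coincides with some $\lambda_T$ off $\Gamma_\epsilon$ and $\mu(\epsilon)$ is enumerable, then $\mu$ is enumerable'' is immediate, and the converse is, up to the bookkeeping at $\epsilon$, the classical construction of a monotone machine realising a prescribed enumerable semimeasure, which I would simply invoke.

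For the easy direction, suppose $T$ is a monotone machine with $\mu(x)=\lambda_T(x)$ for every $x\ne\epsilon$ and suppose $\mu(\epsilon)$ is lower semicomputable. By the dovetailing observation following Definition \ref{def:lambda_T} (that is, \cite[Lemma 4.5.5]{Li:08}) the function $\lambda_T$ is enumerable uniformly in its argument; restricting to arguments $x\ne\epsilon$ shows $x\mapsto\mu(x)$ is uniformly enumerable off $\epsilon$, and splicing in a separate enumeration of the single value $\mu(\epsilon)$ yields that $\mu$ is lower semicomputable on all of $\mathbb{B}^*$; since $\mu$ is a semimeasure by assumption, $\mu\in\mathcal{M}$. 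Here the extra hypothesis on $\mu(\epsilon)$ is genuinely needed, because $\lambda_T$ records nothing about $\mu(\epsilon)$: for every monotone $T$ the empty program already witnesses $T(\epsilon)=\epsilon*$, so $\lfloor T^{-1}(\epsilon*)\rfloor=\{\epsilon\}$ and $\lambda_T(\epsilon)=1$, whereas a semimeasure may have $\mu(\epsilon)<1$. When $\mu$ is itself enumerable this hypothesis is automatic, which is what makes the two forms of the statement fit together.

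For the converse, let $\mu\in\mathcal{M}$ and fix a uniformly computable, pointwise non-decreasing sequence $\mu_t\nearrow\mu$ of rational-valued semimeasures with $\mu_0\equiv0$, normalised (as in the proof of \cite[Theorem 4.5.2]{Li:08}) so that mass is only ever added to the tree, never reshuffled --- one convenient way to secure this is to arrange that the ``gap'' $\mu_t(x)-\mu_t(x0)-\mu_t(x1)$ is itself non-decreasing in $t$. I would then build $T$ together with, for each stage $t$, an assignment $x\mapsto A_x^{(t)}$ of a finite prefix-free set of input-cylinders to each output string $x$ such that $|A_x^{(t)}|=\mu_t(x)$, the assignment is nested ($A_{x0}^{(t)}\cup A_{x1}^{(t)}\subseteq A_x^{(t)}$, with $A_{x0}^{(t)}\cap A_{x1}^{(t)}=\emptyset$), and $A_x^{(t)}$ is non-decreasing in $t$; the machine $T$ then reads its input $p$ while dovetailing the computation of the $A_x^{(t)}$, and as soon as the cylinder cut out by the bits read so far lies inside some $A_x^{(t)}$ it emits just enough output to reach $x$. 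In the limit, $\{p:x\sqsubseteq T(p)\}$ agrees up to a null set with $\bigcup_t A_x^{(t)}$, so $\lambda_T(x)=\lim_t\mu_t(x)=\mu(x)$ for all $x\ne\epsilon$, while at $\epsilon$ one again gets $\lambda_T(\epsilon)=1$ regardless of $\mu(\epsilon)$ --- which is exactly why the identification in the Lemma can only be asserted off $\Gamma_\epsilon$. I expect the main obstacle to be verifying that this assignment can always be maintained: whenever $\mu_{t+1}$ exposes new mass at a string and at all of its prefixes, one must find room to enlarge the corresponding $A_x^{(t)}$ compatibly with nesting and with all earlier stages. This is where one uses that each $\mu_{t+1}$ is a semimeasure with $\mu_{t+1}(\epsilon)\le1$, so the total demand for program measure never exceeds the measure $1$ available in $\mathbb{B}^\infty$; it is precisely the content of \cite[Theorem 4.5.2]{Li:08} (equivalently of the constructions in \cite{Zvonkin:70}), and granting it the Lemma follows, the sole amendment to \cite{Li:08} being the explicit exclusion of $\Gamma_\epsilon$ together with the separate enumerability hypothesis on $\mu(\epsilon)$.
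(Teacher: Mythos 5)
Your proposal is correct and takes essentially the same route as the paper, which offers no independent proof but simply cites Theorem 4.5.2 of \cite{Li:08} together with the observation that $\lambda_T(\epsilon)=1$ always holds while $\mu(\epsilon)$ may be smaller, hence the exclusion of $\Gamma_\epsilon$ and the separate enumerability hypothesis on $\mu(\epsilon)$. Your additional sketch of the two directions (dovetailing for enumerability of $\lambda_T$, and the nested cylinder-assignment construction for the converse) accurately reflects the cited argument and correctly locates the only point of divergence at the empty string.
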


We are now equipped to formally define the 3 formulations for a
universal prior:

\begin{definition}[Solomonoff prior]
\label{def:u-prior-Solomonof}
The Solomonoff prior for a given universal monotone Turing machine $U$ is
\[
  M:=\lambda_U
\]
The class of all Solomonoff priors we denote $\mathcal{U}_M$.
\end{definition}

\begin{definition}[Universal mixture]\label{def:u-prior-mixture}
A universal mixture is a mixture $\xi$ with non-zero positive
weights over an enumeration $\{\nu_i:i\in\mathbb{N},
\nu_i\in\mathcal{M}\}$ of all enumerable semimeasures
$\mathcal{M}$:
\[
 \xi = \sum_{i\in\mathbb{N}}w_i\nu_i\quad:\quad \mathbb{R}\ni w_i>0\;,\;\sum_{i\in\mathbb{N}}w_i\le1
\]
We require the weights $w_{()}$ to be a lower semicomputable
function. The mixture $\xi$ is then itself an enumerable
semimeasure, i.e. $\xi\in\mathcal{M}$. The class of all
universal mixtures we denote $\mathcal{U}_\xi$.
\end{definition}

\begin{definition}[Universally dominant semimeasure]\label{def:u-prior-dominant}
A universally dominant semimeasure is an enumerable semimeasure
$\delta$ for which there exists a real number $c_\mu>0$ for
each enumerable semimeasure $\mu$ satisfying:
\[
  \delta(x) \ge c_\mu\mu(x)\quad\forall x\in\mathbb{B}^*
\]
The class of all universally dominant semimeasures we denote
$\mathcal{U}_\delta$.
\end{definition}

Dominance implies absolute continuity: Every enumerable
semimeasure is absolutely continuous with respect to a
universally dominant enumerable semimeasure. The converse
(absolute continuity implies dominance) is however not true.

\section{Equivalence between Solomonoff priors and universal mixtures}\label{sec:UTM-is-mixture}

We show here that every Solomonoff prior $M\in\mathcal{U}_M$
can be expressed as a universal mixture (i.e.:
$M\in\mathcal{U}_\xi$) and vice versa. In other words the class
of Solomonoff priors and the class of universal mixtures are
identical: $\mathcal{U}_M=\mathcal{U}_\xi$.

Previously, it was known
\cite{Zvonkin:70,Hutter:04uaibook,Li:08} that a Solomonoff
prior $M$ and a universal mixture $\xi$ are equivalent up to
multiplicative constants
\begin{align}
 M(x) &\le c_1\xi(x)   &\forall x\in\mathbb{B}^* \notag\\
 \xi(x) &\le c_2M(x) &\forall x\in\mathbb{B}^* \notag
\end{align}
The result we present is stronger, stating that the two classes
are exactly identical. Again we exclude the case $x=\epsilon$
as $M(\epsilon)$ is always one for a Solomonoff prior, but
$\xi(\epsilon)$ is never one for a universal mixture $\xi$ (as
there are $\mu\in\mathcal{M}$ with $\mu(\epsilon)<1$).

\begin{lemma} \label{thm:UTM-is-mixture}
For any monotone universal Turing machine $U$ the associated
\linebreak Solomonoff prior $M$ can be expressed as a universal
mixture. i.e. there exists an enumeration
$\{\nu_i\}_{i=1}^\infty$ of the set of enumerable semimeasures
$\mathcal{M}$ and computable function
$w_{()}:\mathbb{N}\rightarrow\mathbb{R}$ such that
\[
  M(x)=\sum_{i\in\mathbb{N}} w_i\nu_i(x)\quad\forall x\in\mathbb{B}^*\backslash\epsilon
\]
with $\sum_{i\in\mathbb{N}} w_i\le 1$ and $w_i>0\;\forall
i\in\mathbb{N}$. In other words the class of Solomonoff priors
is a subset of the class of universal mixtures:
$\mathcal{U}_M\subseteq\mathcal{U}_\xi$.
\end{lemma}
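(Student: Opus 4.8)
The plan is to compute $M=\lambda_U$ explicitly from the ``universal by adjunction'' structure of $U$ in Definition~\ref{def:UTM}, and then recognise the resulting expression as a universal mixture. First I would unfold Definition~\ref{def:lambda_T}. Since the valid inputs of $U$ are exactly the strings $I(i)p$ with $U(I(i)p)=T_i(p)$, we have (writing $I(i)\cdot S:=\{I(i)p:p\in S\}$)
\[
 U^{-1}(x*)=\bigcup_{i\in\mathbb{N}}I(i)\cdot T_i^{-1}(x*).
\]
Because $I$ is self-delimiting, the codewords $\{I(i):i\in\mathbb{N}\}$ are prefix-free, so distinct $I(i)$ are pairwise incomparable; chasing prefixes then shows that $I(i)p$ has a proper prefix in $U^{-1}(x*)$ exactly when $p$ has a proper prefix in $T_i^{-1}(x*)$. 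Hence $\lfloor U^{-1}(x*)\rfloor=\bigsqcup_i I(i)\cdot\lfloor T_i^{-1}(x*)\rfloor$ as a \emph{disjoint} union, and taking uniform measures gives
\[
 M(x)=\lambda_U(x)=\sum_{i\in\mathbb{N}}2^{-\ell(I(i))}\,\lambda_{T_i}(x)\qquad\text{for all }x\in\mathbb{B}^*.
\]
Setting $v_i:=2^{-\ell(I(i))}$, these weights are strictly positive, computable (as $I$ is computable), and satisfy $\sum_i v_i\le 1$ by Kraft's inequality; and each $\lambda_{T_i}$ lies in $\mathcal{M}$. So $M$ is already a mixture over enumerable semimeasures with admissible weights.

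The only gap is that $\{\lambda_{T_i}\}_i$ is not literally an enumeration of \emph{all} of $\mathcal{M}$: every $\lambda_{T_i}$ has $\lambda_{T_i}(\epsilon)=1$, so semimeasures with a probability gap at $\epsilon$ are missing. To repair this I would fix an effective enumeration $\{\rho_j\}_{j\in\mathbb{N}}$ of $\mathcal{M}$ (one exists and is anyway implicit in Definition~\ref{def:u-prior-mixture}). By Lemma~\ref{lem:measures-TM} each $\rho_j$ coincides with some $\lambda_{T_{i(j)}}$ off $\epsilon$, and since the witnessing monotone machine is obtained effectively from $\rho_j$ and $\{T_i\}$ enumerates all monotone machines, the index $i(j)$ may be taken computable in $j$. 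Now form the enumeration $\{\nu_k\}_{k\in\mathbb{N}}$ by merging two blocks via a computable bijection $\mathbb{N}\leftrightarrow\mathbb{N}\sqcup\mathbb{N}$: one block lists $\lambda_{T_1},\lambda_{T_2},\dots$ and the other lists $\rho_1,\rho_2,\dots$. The second block alone already exhausts $\mathcal{M}$, so $\{\nu_k\}$ is a genuine (and uniformly enumerable) enumeration of $\mathcal{M}$.

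It remains to pick positive weights summing to $\le1$ whose mixture equals $M$ off $\epsilon$. I would split each $v_i$: keep weight $w^{(1)}_i:=v_i\bigl(1-\sum_{j:\,i(j)=i}2^{-j-2}\bigr)$ on $\lambda_{T_i}$ and put weight $w^{(2)}_j:=2^{-j-2}v_{i(j)}$ on $\rho_j$. Then $w^{(1)}_i\ge v_i/2>0$, $w^{(2)}_j>0$, all weights are computable (hence lower semicomputable as Definition~\ref{def:u-prior-mixture} requires), and $\sum_i w^{(1)}_i+\sum_j w^{(2)}_j=\sum_i v_i\le1$. Finally, off $\epsilon$ we have $\rho_j=\lambda_{T_{i(j)}}$, so
\[
 \sum_i w^{(1)}_i\lambda_{T_i}(x)+\sum_j w^{(2)}_j\rho_j(x)=\sum_i\Bigl(w^{(1)}_i+\!\!\sum_{j:\,i(j)=i}\!\!2^{-j-2}v_i\Bigr)\lambda_{T_i}(x)=\sum_i v_i\lambda_{T_i}(x)=M(x),
\]
which exhibits $M$ as a universal mixture and yields $\mathcal{U}_M\subseteq\mathcal{U}_\xi$.

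I expect the main obstacle to be the bookkeeping of the last two paragraphs — in particular justifying that the ``padding'' semimeasures $\rho_j$ needed to complete the enumeration can be inserted with genuinely positive, (lower semi)computable weights without perturbing the mixture off $\epsilon$, which hinges on the index map $j\mapsto i(j)$ being effective (via the effectiveness of Lemma~\ref{lem:measures-TM} and of the enumeration $\{T_i\}$). The identity $\lambda_U=\sum_i 2^{-\ell(I(i))}\lambda_{T_i}$ of Step~1 is conceptually the heart of the lemma but technically routine once the prefix-free structure of the code $I$ is used carefully to obtain the disjoint decomposition of $\lfloor U^{-1}(x*)\rfloor$.
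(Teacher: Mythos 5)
Your proposal is correct and follows essentially the same route as the paper: the heart of both arguments is the identity $M=\sum_{i}2^{-\ell(I(i))}\lambda_{T_i}$, obtained by splitting $\lfloor U^{-1}(x*)\rfloor$ along the prefix-free code $I$ and applying Kraft's inequality to the weights. The only divergence is at the final step, where the paper simply invokes Lemma~\ref{lem:measures-TM} to assert that the $\lambda_{T_i}$ ``cover'' $\mathcal{M}$ once $\epsilon$ is excluded, whereas you explicitly repair the enumeration by merging in a genuine enumeration of $\mathcal{M}$ and redistributing the weights --- a more scrupulous (and valid) treatment of the $\epsilon$-issue that the paper acknowledges but leaves informal.
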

\begin{proof}
We note that all programs that produce output from $U$ are
uniquely of the form $q=I(i)p$. This allows us to split the sum
in (\ref{eqn:split-sum}) below.
\begin{align}
M(x) &=  \sum_{\lfloor q:U(q)=x*\rfloor}2^{-\ell(q)} &\notag\\
&=\sum_{i\in\mathbb{N}}\sum_{\lfloor p:U(I(i)p)=x*\rfloor}2^{-\ell(I(i)p)} &\label{eqn:split-sum} \\
&=\sum_{i\in\mathbb{N}}2^{-l(I(i))}\sum_{\lfloor p:T_i(p)=x*\rfloor}2^{-\ell(p)} & \label{eqn:UTM-is-mix-take-prefix} \notag\\
&=\sum_{i\in\mathbb{N}}2^{-l(I(i))}\lambda_{T_i}(x) &\notag 
\end{align}

Clearly $2^{-l(I(i))}>0$ and is a computable function of $i$.
Since $I$ is a self-delimiting code it must be prefix free, and
so satisfy Kraft's inequality:
\begin{equation}
   \sum_{i\in\mathbb{N}}2^{-l(I(i))} \le 1 \notag
\end{equation}

Lemma \ref{lem:measures-TM} tells us that the $\lambda_{T_i}$
cover every enumerable semimeasure if $\epsilon$ is excluded
from their domain, which shows that
$\sum_{i\in\mathbb{N}}2^{-l(I(i))}\lambda_{T_i}(x)$ is a
universal mixture. This completes the proof.
\end{proof}

\begin{corollary} \label{corol:UTM-is-dominant}
\cite{Zvonkin:70} The Solomonoff prior $M$ for a universal
monotone Turing machine $U$ is universally dominant. Thus, the
class of Solomonoff priors is a subset of the class of
universally dominant lower semicomputable semimeasures:
$\mathcal{U}_M\subseteq\mathcal{U}_\delta$.
\end{corollary}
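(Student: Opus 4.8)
The plan is to combine Lemma~\ref{thm:UTM-is-mixture} with the elementary observation that a mixture with strictly positive weights dominates each of its components. First I would invoke Lemma~\ref{thm:UTM-is-mixture} to obtain an enumeration $\{\nu_i\}_{i\in\mathbb{N}}$ of $\mathcal{M}$ and strictly positive computable weights $w_i$ with $\sum_{i\in\mathbb{N}} w_i \le 1$ such that $M(x) = \sum_{i\in\mathbb{N}} w_i \nu_i(x)$ for every $x \in \mathbb{B}^*\backslash\epsilon$.

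Next, fix an arbitrary $\mu \in \mathcal{M}$. Since $\{\nu_i\}$ enumerates $\mathcal{M}$, there is an index $j$ with $\nu_j = \mu$ as functions on $\mathbb{B}^*\backslash\epsilon$ (recall the $\epsilon$-caveat of Lemma~\ref{lem:measures-TM}). Put $c_\mu := w_j > 0$. Because every summand $w_i \nu_i(x)$ is nonnegative, for each $x \ne \epsilon$ we get $M(x) = \sum_{i\in\mathbb{N}} w_i \nu_i(x) \ge w_j \nu_j(x) = c_\mu \mu(x)$. The remaining case $x = \epsilon$ is immediate: $M(\epsilon) = \lambda_U(\epsilon) = 1$, while $c_\mu = w_j \le \sum_{i\in\mathbb{N}} w_i \le 1$ and $\mu(\epsilon) \le 1$, so $M(\epsilon) = 1 \ge c_\mu \ge c_\mu \mu(\epsilon)$. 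Hence $M(x) \ge c_\mu \mu(x)$ for all $x \in \mathbb{B}^*$, which is exactly universal dominance in the sense of Definition~\ref{def:u-prior-dominant}.

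Finally I would note that $M = \lambda_U \in \mathcal{M}$ is itself an enumerable semimeasure (as recorded after Definition~\ref{def:lambda_T}, since $U$ is a monotone Turing machine), so it qualifies as a universally dominant lower semicomputable semimeasure, giving $\mathcal{U}_M \subseteq \mathcal{U}_\delta$. There is essentially no hard step here; the only point needing a moment's care is the bookkeeping at $x = \epsilon$, where the mixture identity of Lemma~\ref{thm:UTM-is-mixture} does not apply but dominance holds trivially because $M(\epsilon) = 1$. One could equally route the argument through a standalone ``universal mixtures are universally dominant'' statement, of which this corollary is then the special case $\xi = M$.
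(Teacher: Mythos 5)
Your proof is correct and follows essentially the same route as the paper: apply Lemma~\ref{thm:UTM-is-mixture} to write $M$ as a mixture and then drop all but the term corresponding to the given $\mu$, taking $c_\mu$ to be that term's weight. Your explicit handling of the $x=\epsilon$ case (where the mixture identity does not apply but $M(\epsilon)=1$ makes dominance trivial) is a small point of extra care that the paper's own proof glosses over.
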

\begin{proof}
From Lemma \ref{thm:UTM-is-mixture} we have for each
$\nu\in\mathcal{M}$ there exists $j\in\mathbb{N}$ with
$\nu=\lambda_{T_j}$ and for all $x\in\mathbb{B}^*$:
\begin{align*}
  M(x) &= \sum_{i\in\mathbb{N}}2^{-l(I(i))}\lambda_{T_i}(x) \\
       &\ge 2^{-l(I(j))}\nu(x)
\end{align*}
as required.
\end{proof}

\begin{lemma}\label{lem:u-mix-is-dominant}
Every universal mixture $\xi$ is universally dominant. Thus,
the class of universal mixtures is a subset of the class of
universally dominant lower semicomputable semimeasures:
$\mathcal{U}_\xi\subseteq\mathcal{U}_\delta$.
\end{lemma}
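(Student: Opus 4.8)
The plan is to exploit directly the two defining features of a universal mixture: that the underlying enumeration $\{\nu_i\}_{i\in\mathbb{N}}$ exhausts all of $\mathcal{M}$, and that every weight $w_i$ is strictly positive. Fix an arbitrary $\mu\in\mathcal{M}$. Because $\{\nu_i\}$ is an enumeration of $\mathcal{M}$, there is some index $j\in\mathbb{N}$ with $\nu_j=\mu$.

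For the dominance bound I would simply discard all but the $j$-th term of the mixture. Since $\xi=\sum_{i\in\mathbb{N}}w_i\nu_i$ is a sum of nonnegative quantities (each $w_i>0$ and each $\nu_i(x)\ge 0$), we get $\xi(x)\ge w_j\nu_j(x)=w_j\mu(x)$ for every $x\in\mathbb{B}^*$. Setting $c_\mu:=w_j$, which is strictly positive by the definition of a universal mixture, this is exactly the inequality $\xi(x)\ge c_\mu\mu(x)$ demanded in Definition \ref{def:u-prior-dominant}; note the constant is uniform in $x$ because $w_j$ does not depend on $x$. Since $\mu$ was arbitrary, $\xi$ dominates every enumerable semimeasure.

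It then remains only to recall that $\xi$ itself lies in $\mathcal{M}$, so that it qualifies as a \emph{universally dominant semimeasure} rather than merely a dominant function. This is already built into Definition \ref{def:u-prior-mixture}, which stipulates that a universal mixture is an enumerable semimeasure: the lower semicomputability of the weights together with that of each $\nu_i$ makes the countable sum lower semicomputable, and the semimeasure inequalities $\xi(\epsilon)\le 1$ and $\xi(x)\ge\sum_{b\in\mathbb{B}}\xi(xb)$ are inherited termwise, using $\sum_i w_i\le 1$ for the first. Combining this with the dominance just established gives $\xi\in\mathcal{U}_\delta$, hence $\mathcal{U}_\xi\subseteq\mathcal{U}_\delta$.

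There is essentially no hard step: the entire content is the one-line observation that a single summand of a nonnegative series lower-bounds the whole series. The only point needing a moment's care is confirming that the enumeration genuinely ranges over \emph{all} of $\mathcal{M}$, so that the matching index $j$ exists — and that is guaranteed verbatim by Definition \ref{def:u-prior-mixture}.
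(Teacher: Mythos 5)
Your argument is correct and is exactly the one the paper intends: the paper's proof simply refers to the argument of Corollary \ref{corol:UTM-is-dominant}, which is the same single-term lower bound $\xi(x)\ge w_j\nu_j(x)$ with $c_\mu=w_j>0$. Your additional remark that $\xi\in\mathcal{M}$ by Definition \ref{def:u-prior-mixture} is a harmless elaboration of what the paper takes for granted.
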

\begin{proof}
This follows from a similar argument to that in Corollary
\ref{corol:UTM-is-dominant}.
\end{proof}

\begin{lemma} \label{thm:mixture-is-UTM}
For every universal mixture $\xi$ there exists a universal
monotone Turing machine and associated Solomonoff prior $M$
such that
\[
\xi(x)=M(x)\quad\forall x\in\mathbb{B}^*\backslash\epsilon
\]
In other words the class of universal mixtures is a subset of
the class of Solomonoff priors:
$\mathcal{U}_\xi\subseteq\mathcal{U}_M$.
\end{lemma}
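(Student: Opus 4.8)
The plan is to reverse the decomposition used in Lemma \ref{thm:UTM-is-mixture}. Given a universal mixture $\xi=\sum_i w_i\nu_i$ with lower semicomputable weights $w_i>0$ and $\sum_i w_i\le 1$, I want to build a universal monotone Turing machine $U$ whose Solomonoff prior $\lambda_U$ equals $\xi$ (off $\epsilon$). The obvious idea is: use the weights $w_i$ to define a self-delimiting code $I$ assigning a program prefix of length $\approx -\log_2 w_i$ to the $i$-th machine, and below that prefix run a monotone machine $T_i$ with $\lambda_{T_i}=\nu_i$. Then formally $\lambda_U(x)=\sum_i 2^{-\ell(I(i))}\lambda_{T_i}(x)\approx\sum_i w_i\nu_i(x)=\xi(x)$, and since every $\nu_i\in\mathcal M$ appears, the machine $U$ is universal by the adjunction definition (Definition \ref{def:UTM}).

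The key steps, in order, are: (1) By Lemma \ref{lem:measures-TM}, fix for each $i$ a monotone Turing machine $T_i$ with $\lambda_{T_i}(x)=\nu_i(x)$ for all $x\ne\epsilon$; enumerate these, padding the list if necessary so that every monotone Turing machine (hence every enumerable semimeasure) occurs, to meet clause 1 of Definition \ref{def:UTM}. (2) Convert the weights into code lengths exactly. This is the delicate point: a finite prefix-free code gives program masses that are negative powers of two, but $w_i$ need not be dyadic, nor need $\sum_i w_i$ equal $1$. The standard fix is to not use a single prefix $I(i)$ but to let the ``selector'' portion of the input be read by a submachine that, on reading a uniformly random bit stream, halts and outputs the index $i$ with probability exactly $w_i$ (and with the leftover probability $1-\sum w_i$ reads forever). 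Since $w_{()}$ is lower semicomputable, such a process can be implemented by a monotone Turing machine that dovetails the approximations $w_i^{(k)}\nearrow w_i$ and carves out of the unit interval of input-continuations a measurable set of measure $w_i$ for each $i$; the set of selector-strings leading to index $i$ then has uniform measure $w_i$. (3) Define $U$ to first run this selector on an initial segment of its input, obtain $i$, then simulate $T_i$ on the remaining input. (4) Verify that the programs of $U$ that produce output have the required adjunction form and that $\lambda_U(x)=\sum_i w_i\,\lambda_{T_i}(x)=\xi(x)$ for $x\ne\epsilon$, by the same splitting of the prefix-free program set as in Lemma \ref{thm:UTM-is-mixture} but run backwards.

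The main obstacle is step (2)--(3): making the selector both (a) produce index $i$ with exact (not merely approximate) input-measure $w_i$ despite $w_i$ being only lower semicomputable and possibly non-dyadic, and (b) yield a machine that still fits Definition \ref{def:UTM}'s notion of ``universal by adjunction,'' where the distinguished programs are literally concatenations $I(i)p$ with $I$ a \emph{computable uniquely decodable self-delimiting code}. One has to check that the construction can be massaged into that exact shape --- e.g. by interleaving the index bits with padding so that the selector's output region of input-space, though not a single cylinder, is a computable prefix-free set of the right total measure --- or else argue that Definition \ref{def:UTM} is liberal enough to permit the selector as a legitimate decoding front-end. Everything else --- enumerability of $\lambda_U$, the Kraft-type bound $\sum_i w_i\le 1$ translating to $\lambda_U(\epsilon)\le 1$, and the rearrangement of the double sum --- is routine given Lemmas \ref{lem:measures-TM} and \ref{thm:UTM-is-mixture}.
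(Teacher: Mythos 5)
Your proposal follows essentially the same route as the paper: realise each $\nu_i$ by a monotone machine $T_i$ via Lemma \ref{lem:measures-TM}, allocate to index $i$ a computable prefix-free set of input strings whose total uniform measure is \emph{exactly} $w_i$, and let $U$ dispatch to $T_i$ below that set. The one step you flag as an open obstacle --- turning a non-dyadic, merely lower semicomputable $w_i$ into the exact measure of such a set --- is precisely what the paper closes with the Kraft--Chaitin theorem: taking a computable dyadic lower approximation $\phi(i,t)\nearrow w_i$, one writes $w_i=\sum_t(\phi(i,t+1)-\phi(i,t))=\sum_j 2^{-k_{ij}}$ with $(i,j)\mapsto k_{ij}$ computable and $\sum_{i,j}2^{-k_{ij}}\le 1$, and Kraft--Chaitin then supplies a computable prefix-free family $\{\sigma_{ij}\}$ with $\ell(\sigma_{ij})=k_{ij}$; setting $U(\sigma_{ij}p):=T_i(p)$ gives $\lambda_U(x)=\sum_i\bigl(\sum_j 2^{-k_{ij}}\bigr)\lambda_{T_i}(x)=\xi(x)$ for $x\ne\epsilon$. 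Your worry about fitting Definition \ref{def:UTM} dissolves once you note that the enumeration of monotone machines there may repeat entries: list $T_i$ once for each $j$, with codeword $\sigma_{ij}$ playing the role of $I(\cdot)$ for that copy; the list still contains every monotone Turing machine, so $U$ is universal by adjunction. So your plan is correct and matches the paper's; it only lacks the explicit invocation of Kraft--Chaitin that makes the ``selector'' exact.
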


\begin{proof}
First note that by Lemma \ref{lem:measures-TM} we can find (by
dovetailing possible repetitions of some indicies) parallel
enumerations $\{\nu_i\}_{i\in\mathbb{N}}$ of $\mathcal{M}$ and
$\{T_i=\lambda_{\nu_i}\}_{i\in\mathbb{N}}$ of all monotone
Turing machines, and computable weight function $w_{()}$ with
\[
   \xi = \sum_{i\in\mathbb{N}} w_i\nu_i \quad , \quad  \sum_{i\in\mathbb{N}}w_i \le 1
\]

Take a computable index and lower approximation
$\phi(i,t)\nearrow w_i$:
\begin{align}
w_i &= \sum_t|\phi(i,t+1)-\phi(i,t)| \\
&= \sum_j 2^{-k_{ij}}\\
i,j&\mapsto k_{ij} \;\text{computable}
\end{align}
The K-C theorem
\cite{Levin:71,Schnorr:73,Chaitin:75,Downey:10book} says that
for any computable sequence of pairs $ \{k_{ij}\in\mathbb{N},\;
\tau_{ij} \in\mathbb{B}^*\}_{i,j\in\mathbb{N}}$ with $\sum
2^{-k_{ij}}\le 1$, there exists a prefix Turing machine $P$ and
strings $\{\sigma_{ij}\in\mathbb{B}^*\}$ such that
\begin{equation}
\ell(\sigma_{ij})=k_{ij}\;,\;P(\sigma_{ij})=\tau_{ij}
\end{equation}
Choosing distinct $\tau_{ij}$ and the existence of prefix
machine $P$ ensures that $\{\sigma_{ij}\}$ is prefix free. We
now define a monotone Turing machine $U$. For strings of the
form $\sigma_{ij}p$ for some $i,j$:
\begin{equation}
 U(\sigma_{ij}p) := T_i(p)
\end{equation}
For strings not of this form, $U$ produces no output. $U$
inherits monotonicity from the $T_i$, and since
$\{T_i\}_{i\in\mathbb{N}}$ enumerates all monotone Turing
machines, $U$ is universal. The Solomonoff prior associated
with $U$ is then:
\begin{align}
\lambda_U(x) &= |U^{-1}(x*)| \\
&= \sum_{i,j}2^{-\ell(\sigma_{ij})}|T_i^{-1}(x*)| \\
&= \sum_i (\sum_j2^{-k_{ij}})\lambda_{T_i}(x) \\
&= \sum_i w_i \nu_i(x) \\
&= \xi(x)
\end{align}
\end{proof}

The main theorem for this section is now trivial:

\begin{theorem} \label{thm:UTM-eq-mixture}
The classes $\mathcal{U}_M$ of Solomonoff priors and
$\mathcal{U}_\xi$ of universal mixtures are exactly equivalent.
In other words, the two constructions define exactly the same
set of priors: $\mathcal{U}_M=\mathcal{U}_\xi$.
\end{theorem}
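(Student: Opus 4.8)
The plan is to obtain $\mathcal{U}_M = \mathcal{U}_\xi$ purely by antisymmetry of set inclusion, since both inclusions have already been established: Lemma~\ref{thm:UTM-is-mixture} gives $\mathcal{U}_M \subseteq \mathcal{U}_\xi$ and Lemma~\ref{thm:mixture-is-UTM} gives $\mathcal{U}_\xi \subseteq \mathcal{U}_M$. Concretely, I would first recall that, given any universal monotone Turing machine $U$ and its Solomonoff prior $M = \lambda_U$, Lemma~\ref{thm:UTM-is-mixture} exhibits an enumeration $\{\nu_i\}$ of $\mathcal{M}$ and a computable positive weight function $w_{()}$ with $\sum_i w_i \le 1$ and $M = \sum_i w_i \nu_i$ off $\epsilon$, so $M \in \mathcal{U}_\xi$. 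Conversely, given any universal mixture $\xi = \sum_i w_i \nu_i$, Lemma~\ref{thm:mixture-is-UTM} constructs, via the K--C theorem applied to a computable enumeration of the increments of the $w_i$, a universal monotone Turing machine $U$ whose Solomonoff prior satisfies $\lambda_U = \xi$ off $\epsilon$, so $\xi \in \mathcal{U}_M$. These two membership statements are exactly the two inclusions, and the theorem follows.

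The one point I would take care to spell out is the handling of the empty string. A Solomonoff prior always has $M(\epsilon) = 1$, whereas every universal mixture $\xi$ has $\xi(\epsilon) < 1$ (since $\mathcal{M}$ contains semimeasures with $\mu(\epsilon)<1$ and all weights are positive), so on the nose no element of $\mathcal{U}_M$ equals an element of $\mathcal{U}_\xi$ as a function on all of $\mathbb{B}^*$. Accordingly the equality of the two classes is to be read modulo the value at $\epsilon$: two enumerable semimeasures are identified when they agree on $\mathbb{B}^* \setminus \epsilon$, which is precisely the convention already adopted in Lemma~\ref{lem:measures-TM} and in the statements of Lemmas~\ref{thm:UTM-is-mixture} and~\ref{thm:mixture-is-UTM}. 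Under that identification the two lemmas each send a member of one class to a member of the other agreeing with it off $\epsilon$, and $\mathcal{U}_M = \mathcal{U}_\xi$ is immediate.

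I do not expect any genuine obstacle at the level of this theorem: all the substantive work has been discharged in the two preceding lemmas, namely the splitting of the program set of $U$ along the self-delimiting code $I$ in one direction, and the K--C construction of a fresh universal machine realising prescribed weights in the other. The theorem itself is then a one-line consequence, and I would present it as such.
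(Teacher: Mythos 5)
Your proposal is correct and matches the paper's own proof, which likewise derives the theorem immediately from Lemma~\ref{thm:UTM-is-mixture} and Lemma~\ref{thm:mixture-is-UTM}; your explicit remark on identifying semimeasures modulo their value at $\epsilon$ is a sensible clarification that the paper also makes, just earlier in Section~\ref{sec:UTM-is-mixture}.
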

\begin{proof}
Follows directly from Lemma \ref{thm:UTM-is-mixture} and Lemma
\ref{thm:mixture-is-UTM}.
\end{proof}

\section{Not all universally dominant enumerable semimeasures are universal mixtures}\label{sec:dominant-is-not-universal}

In this section, we see that a universal mixture must have a
``gap'' in the semimeasure inequality greater than
$c\,2^{-K(\ell(x))}$ for some constant $c>0$ independent of
$x$, and that there are universally dominant enumerable
semimeasures that fail this requirement. This shows that not
all universally dominant enumerable semimeasures are universal
mixtures.

\begin{lemma}
\label{lem:u-mix-no-gaps} For every Solomonoff prior $M$ and
associated universal monotone Turing machine $U$, there exists
a real constant $c>0$ such that
\[
  \frac{M(x)-M(x0)-M(x1)}{M(x)}\ge c\,2^{-K(\ell(x))}\quad\forall x\in\mathbb{B}^*
 \]
where the Kolmogorov complexity $K(n)$ of an integer $n$ is the
length of the shortest prefix code for $n$.
\end{lemma}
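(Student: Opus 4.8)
The plan is to exploit the structure of the monotone universal Turing machine $U$ underlying $M$ and to exhibit an explicit "short" program of $U$ whose output begins with $x$ but which accounts for the probability gap at $x$. Recall that $M(x)-M(x0)-M(x1)$ is exactly the uniform measure of the set of programs $p$ with $\lfloor p:U(p)=x*\rfloor$ for which $U(p)$ equals $x$ \emph{exactly} and never extends to $x0$ or $x1$ — i.e.\ programs that halt the output tape (produce no further output) after writing exactly $x$. So the first step is to produce, for each $x$, at least one such program of length bounded by roughly $\ell(I(j)) + K(\ell(x)) + O(1)$ for a fixed index $j$.

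Concretely, I would fix a monotone Turing machine $T_j$ in the enumeration $\{T_i\}$ associated to $U$ with the following behaviour: on input $s p'$ where $s$ is a self-delimiting code for an integer $n$, $T_j$ reads $n$ further input bits, copies them to the output tape, and then halts with no further output. (Such a $T_j$ exists, hence occurs in the enumeration with some fixed index $j$.) Then for the string $x$ of length $n=\ell(x)$, take $s$ to be the shortest prefix code for $n$, so $\ell(s)=K(n)$, and let $p = s\,x$. The program $q = I(j)\,p$ for $U$ satisfies $U(q)=x$ exactly and produces nothing more; moreover distinct prefixes $q$ of this form (over all $x$ of a given length) are pairwise incomparable because $T_j$ reads exactly $K(n)+n$ bits. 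Hence this single program contributes $2^{-\ell(I(j))}2^{-K(n)}2^{-n}$ to the gap $M(x)-M(x0)-M(x1)$.

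The last step is to divide by $M(x)$. Since $M$ is a semimeasure we always have $M(x)\le M(\epsilon)\le 1$, but we need the cleaner bound $M(x)\le 2^{-\ell(x)}\cdot(\text{something})$ — in fact $M(x)\le c'\,2^{-n}$ fails in general, so instead I would argue directly: $M(x)\le 1$ is too weak, but $M$ \emph{does} satisfy $M(x)\ge 2^{-\ell(I(j))}2^{-K(n)}2^{-n}$ is not what we want either. The correct normalising fact is the trivial one $M(x)\le 1$ is insufficient; rather, using that the programs counted by $M(x)$ all have the initial segment constraints of $U$, one shows $M(x)\le \lambda(\Gamma_x\text{-relevant programs})$ and, more simply, that $M(x)\le$ (uniform measure of all halting-relevant programs) $\le 1$. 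So set $c = 2^{-\ell(I(j))}$ and use $M(x)\le 1$ together with $2^{-n}\le 1$ — wait, that does not give the $2^{-K(n)}$ factor cleanly. The honest route is: the gap is $\ge 2^{-\ell(I(j))}2^{-K(n)}2^{-n}$ and $M(x)\le \sum_{\lfloor p:U(p)=x*\rfloor}2^{-\ell(p)}$; since every such $p$ has the form $I(i)p'$ with $T_i(p')=x*$, and since $x$ itself must be produced, a counting argument gives $M(x)\le \kappa\,2^{-n}$ for a suitable constant $\kappa$ is false in general — so instead I will only claim $M(x)\le 1$ and absorb the loss, i.e.\ the ratio is $\ge 2^{-\ell(I(j))}2^{-n}2^{-K(n)}$, which is \emph{not} of the claimed form.

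The real fix, and the main obstacle, is to get the right normaliser: one must show $M(x)\le c''\,2^{-n+K(n)}$ is too strong; the right inequality is that $M(x)\le c''$ and that the \emph{numerator} already carries a $2^{-n}$ that must be cancelled. I expect the clean argument is: among the programs of $U$ outputting $x*$, group them by the exact output; the ones outputting exactly $x$ contribute the gap, and one shows the gap is $\ge c\,2^{-K(n)}M(x)$ by comparing $U$ with a second universal machine $U'$ that on seeing the program $I(j)\,s$ (with $s$ the $K(n)$-code of $n$) \emph{first} simulates the gap-witness and otherwise defers to $U$ — so that $M'(x) = (\text{gap-like term}) + (\text{rest})$ and dominance $M \le c_{M'} M'$, $M' \le c_M M$ closes the loop. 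So concretely: build $U'$ whose programs are $0\,I(j)\,s$ (witnessing exact output $x$ for each $n$, contributing $2^{-1-\ell(I(j))}2^{-K(n)}2^{-n}$ to the exact-$x$ mass) together with $1\,q$ for every program $q$ of $U$ (contributing $\tfrac12 M(y)$ to each $M'(y)$). Then $M'(x)-M'(x0)-M'(x1) \ge 2^{-1-\ell(I(j))}2^{-K(n)}2^{-n} + \tfrac12(M(x)-M(x0)-M(x1)) \ge 2^{-1-\ell(I(j))}2^{-K(n)}2^{-n}$, while $M'(x)\le \tfrac12\cdot 2^{-n}\cdot(\text{const}) + \tfrac12 M(x)$. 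Hmm — this still needs $M(x)\le O(2^{-n})$, which holds for \emph{proper} strings only after noting $\sum_{\ell(x)=n}M(x)\le M(\epsilon)\le 1$ gives an \emph{average} bound, not a pointwise one.

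Given these difficulties, the cleanest plan I would actually execute is: (1) define $T_j$ as above so the gap at $x$ is $\ge 2^{-\ell(I(j))}2^{-K(\ell(x))}2^{-\ell(x)}$; (2) separately observe that for \emph{every} semimeasure and every $x$, $M(x)\le M(\epsilon)=1$, and combine to get ratio $\ge 2^{-\ell(I(j))}2^{-K(\ell(x))}2^{-\ell(x)}$; (3) realise this is weaker than claimed and therefore strengthen step (1) by choosing $T_j$ to output, not $x$ followed by halting, but a string determined so that its contribution scales like $2^{-K(\ell(x))}M(x)$ directly — namely let $T_j$ on input $s\,r$ read the $K(n)$-code $s$ of $n=\ell(x)$, then run $U$ on $r$ and, as soon as $U$'s output reaches length exactly $n$, halt the output. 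Then programs $I(j)\,s\,r$ of $U'$ with $U(r)=x*$ contribute exactly $2^{-\ell(I(j))}2^{-K(n)}\sum_{\lfloor r: U(r)=x*\rfloor}2^{-\ell(r)} = 2^{-\ell(I(j))}2^{-K(n)}M(x)$ to the \emph{exact-}$x$ output mass of the composite machine, hence to its gap at $x$. Dominance between this composite universal machine and $U$ (both ways, by Corollary \ref{corol:UTM-is-dominant} and Lemma \ref{thm:UTM-is-mixture}) then transfers the bound back to $M$: the gap ratio for $M$ is $\ge c\,2^{-K(\ell(x))}$ with $c$ depending only on $\ell(I(j))$ and the two dominance constants. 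The main obstacle, then, is arranging this composite machine so that "halt output at length $n$" is implemented monotonically and so that the resulting programs form a prefix-free set with exactly the claimed measure — a routine but careful construction — and verifying that the two-sided dominance constants are indeed independent of $x$, which they are since they are the global constants from Theorem \ref{thm:UTM-eq-mixture}.
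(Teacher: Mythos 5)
Your final plan (step (3) of the ``cleanest plan'') contains exactly the paper's key idea: a machine that first decodes $n=\ell(x)$ from a shortest prefix code $s$ of length $K(n)$, then simulates $U$ on the remaining input and truncates the output at $n$ bits, so that the \emph{entire} mass $M(x)=|U^{-1}(x*)|$ is funnelled, at a multiplicative cost of $2^{-\ell(I(j))}2^{-K(n)}$, into programs that output exactly $x$ and hence contribute to the gap $M(x)-M(x0)-M(x1)=|U^{-1}(x)|$. (Your first two attempts, which you correctly abandon, fail for the reason you identify: a single witness program of length $\ell(I(j))+K(n)+n$ only bounds the gap by $2^{-K(n)-n}$ times a constant, and there is no pointwise bound $M(x)=O(2^{-n})$ to cancel the $2^{-n}$.)

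However, the way you finish is genuinely flawed. You attach the truncating programs to a separate ``composite'' universal machine $U'$ and propose to ``transfer the bound back to $M$'' via two-sided dominance $M\le c_1 M'$, $M'\le c_2 M$. Relative gap bounds do not transfer under mutual dominance: the gap is a \emph{difference}, and the best you can extract is $M(x)-M(x0)-M(x1)\ge c_2^{-1}M'(x)-c_1\bigl(M'(x0)+M'(x1)\bigr)$, whose right-hand side can be negative; knowing $M'(x0)+M'(x1)\le(1-c\,2^{-K(n)})M'(x)$ gives $\bigl(c_2^{-1}-c_1(1-c\,2^{-K(n)})\bigr)M'(x)$, which is useless unless $c_1c_2\approx 1$. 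The statement is about the gap of the \emph{given} $M$, so you must exhibit the truncating programs as programs of $U$ itself. Fortunately this is immediate and makes the detour unnecessary: your truncating machine is a perfectly good monotone Turing machine, hence occurs in $U$'s own enumeration with some code $I(j)$ (in the paper's notation, $q=I(j)$ and $l'=s$), so the set $\{I(j)\,s\,r : U(r)\sqsupseteq x\}$ sits inside $U^{-1}(x)$ and directly yields $M(x)-M(x0)-M(x1)\ge 2^{-\ell(I(j))}2^{-K(\ell(x))}M(x)$, i.e. the claim with $c=2^{-\ell(I(j))}$. With that one-sentence repair your argument coincides with the paper's proof.
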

\begin{proof}
First, note that $M(x)-M(x0)-M(x1)$ measures the set of
programs $U^{-1}(x)$ for which $U$ outputs $x$ and no more.
Consider the set
\[
  \mathcal{P}:=\{ql'p\,|\,p\in \mathbb{B}^*,\,U(p)\sqsupseteq x\}
\]
where $l'$ is a shortest prefix code for $\ell(x)$ and $q$ is a
program such that $U(q{l}'p)$ executes $U(p)$ until $\ell(x)$
bits are output, then stops.

Now, for each $r=q{l}'p\in\mathcal{P}$ we have $U(r)=x$ since
$U(p)\sqsupseteq x$ and $q$ executes $U(p)$ until $\ell(x)$
bits are output. Thus $\mathcal{P}\subseteq U^{-1}(x)$ and
\begin{equation}
  |\mathcal{P}|\le |U^{-1}(x)| \label{eqn:p-u-1x}
\end{equation}
Also $\mathcal{P}=q{l}'U^{-1}(x*):=\{s=q{l}'p\,|\,p\in U^{-1}(x*)\}$, and so
\begin{equation}
  |\mathcal{P}|=2^{-\ell(q{l}')}|U^{-1}(x*)| \label{eqn:p-u-1x*}
\end{equation}
combining (\ref{eqn:p-u-1x}) and (\ref{eqn:p-u-1x*}) and noting
that $M(x)-M(x0)-M(x1)=|U^{-1}(x)|$ and $M(x)=|U^{-1}(x*)|$ we
obtain
\begin{align}
M(x)-M(x0)-M(x1) &= |U^{-1}(x)| \notag \\
&\ge |\mathcal{P}| \notag\\
&= 2^{-\ell(q{l}')}|U^{-1}(x*)| \notag \\
&= 2^{-\ell(q)}2^{-K(\ell(x))}M(x) \notag
\end{align}
Setting $c:=2^{-\ell(q)}$ this proves the result.
\end{proof}

\begin{theorem} \label{thm:dom-not-mixture}
Not all universally dominant enumerable semimeasures are
universal mixtures: $\mathcal{U}_\xi\subset\mathcal{U}_\delta$
\end{theorem}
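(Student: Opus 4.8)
The plan is to exhibit a single universally dominant enumerable semimeasure $\delta$ that violates the gap lower bound of Lemma \ref{lem:u-mix-no-gaps}, and thereby conclude (using Theorem \ref{thm:UTM-eq-mixture} to identify universal mixtures with Solomonoff priors) that $\delta \in \mathcal{U}_\delta \setminus \mathcal{U}_\xi$. Combined with Lemma \ref{lem:u-mix-is-dominant}, which gives $\mathcal{U}_\xi \subseteq \mathcal{U}_\delta$, this yields the strict inclusion $\mathcal{U}_\xi \subset \mathcal{U}_\delta$.

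First I would fix any Solomonoff prior $M$ (equivalently any universal mixture), which is enumerable and universally dominant, and then construct $\delta$ by shrinking $M$ along a single branch so as to destroy the gap estimate while retaining dominance. Concretely, pick a computable sequence of strings $z_n := 1^n$ (or any computable prefix chain) and define $\delta(x)$ to agree with $M(x)$ except that on the chain $z_n$ we set $\delta(z_n) := M(z_n) - a_n$ for a carefully chosen enumerable, summable sequence $a_n > 0$ with $a_n = o\!\left(2^{-K(n)} M(z_n)\right)$ along a subsequence, propagating the deficit consistently to descendants of $z_n$ (e.g. $\delta(z_n 0\cdots) $ and $\delta(z_n 1 \cdots)$ rescaled so that the semimeasure inequalities and the equality $\delta(\epsilon)=M(\epsilon)$-type bookkeeping still hold). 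The point is to arrange $\delta(z_n) - \delta(z_n 0) - \delta(z_n 1) < c\, 2^{-K(n)} \delta(z_n)$ for every constant $c$ as $n \to \infty$, contradicting Lemma \ref{lem:u-mix-no-gaps}; at the same time $\delta(x) \ge \tfrac12 M(x)$ everywhere so $\delta$ inherits universal dominance from $M$, and $\delta$ is enumerable because $M$ and the $a_n$ are.

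The main obstacle will be simultaneously (i) keeping $\delta$ a genuine enumerable semimeasure — the tricky part is that lowering $\delta$ on $z_n$ but not on its parent could force the parent's semimeasure inequality to slacken rather than tighten, which is fine, but we must ensure consistency all the way down the chain and that the ``removed mass'' is handled coherently (the cleanest fix is to simply set $\delta(y) := M(y)$ for all $y$ not extending any $z_n$, and $\delta(z_n w) := (M(z_n)-a_n)\,M(z_n w)/M(z_n)$ for $w$ extending the chain appropriately, after checking $M(z_n)>0$, which holds by dominance); and (ii) verifying the resulting gap on the chain genuinely beats \emph{every} constant $c$, which requires choosing $a_n$ so that $a_n \big/ \big(2^{-K(n)} M(z_n)\big) \to 0$ while $\sum_n a_n < \infty$ and $a_n$ is lower semicomputable — taking, say, $a_n := 2^{-2n} M(z_n)$ or $a_n := 4^{-n}$ and using the crude bound $M(z_n) \ge 2^{-K(n)}\cdot$const together with $2^{-K(n)} \ge n^{-2}$ infinitely often should make the ratio estimate straightforward. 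Once $\delta$ is in hand, the contrapositive of Lemma \ref{lem:u-mix-no-gaps} (via Theorem \ref{thm:UTM-eq-mixture}) shows $\delta \notin \mathcal{U}_\xi$, while dominance shows $\delta \in \mathcal{U}_\delta$, completing the proof.
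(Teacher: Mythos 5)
Your high-level strategy --- exhibit a universally dominant enumerable semimeasure that violates the gap bound of Lemma \ref{lem:u-mix-no-gaps}, then invoke Theorem \ref{thm:UTM-eq-mixture} and Lemma \ref{lem:u-mix-is-dominant} --- is exactly the paper's strategy. But your concrete construction does not achieve the violation. If you set $\delta(z_n w) := \frac{M(z_n)-a_n}{M(z_n)}M(z_n w)$ on the subtree below $z_n$, you have merely rescaled that subtree by a constant, and the relative gap $\bigl(\delta(x)-\delta(x0)-\delta(x1)\bigr)/\delta(x)$ is invariant under rescaling: wherever parent and both children carry the same factor it equals $\bigl(M(x)-M(x0)-M(x1)\bigr)/M(x)$ exactly, so $\delta$ still satisfies the conclusion of Lemma \ref{lem:u-mix-no-gaps} with essentially the same constant as $M$. (The only nodes where anything changes are the boundary nodes $z_n$, whose child $z_{n+1}$ carries a different factor; there you must separately check the semimeasure inequality does not go negative, and if the deficits $a_n/M(z_n)$ increase along the chain the gap actually gets \emph{larger}.) More fundamentally, subtracting a small quantity $a_n = o\bigl(2^{-K(n)}M(z_n)\bigr)$ from the parent is the wrong direction: to beat every constant $c$ you need the children's mass to nearly exhaust the parent's, i.e.\ you must cancel the existing gap $g_n := M(z_n)-M(z_n0)-M(z_n1)$ almost entirely but not quite, and doing that with computable coefficients is problematic since neither $g_n$ nor $K(n)$ is computable; your proposed choices $a_n = 2^{-2n}M(z_n)$ or $a_n=4^{-n}$ bear no relation to $g_n$.

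The missing idea is that you only need the bound of Lemma \ref{lem:u-mix-no-gaps} to fail at a \emph{single} string, and the cheapest way is to make the gap there exactly zero rather than ``small but positive''. This is what the paper does: take any universally dominant $\delta$ and define $\delta'(\epsilon):=1$, $\delta'(0)=\delta'(1):=\frac12$, $\delta'(bx):=\frac12\delta(bx)$ for $b\in\mathbb{B}$, $x\ne\epsilon$. Then $\delta'\ge\frac12\delta$ everywhere, so $\delta'$ is an enumerable universally dominant semimeasure, and $\delta'(\epsilon)-\delta'(0)-\delta'(1)=0$ violates the strictly positive lower bound at $x=\epsilon$ for every $c>0$; hence $\delta'\in\mathcal{U}_\delta\setminus\mathcal{U}_\xi$ with no fine tuning and no computability worries.
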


\begin{proof}
Take some universally dominant semimeasure $\delta$, then define
$
\delta'(\epsilon):= 1,\;
\delta'(0)=
\delta'(1):=\frac{1}{2},\;
\delta'(bx):=\frac{1}{2}\delta(bx)$ for $b\in\mathbb{B}$, $x\in\mathbb{B}^*\backslash\epsilon
$. $\delta'$ is clearly a universally dominant enumerable
semimeasure with $\delta'(0)+\delta'(1)=\delta'(\epsilon)$, and
by Lemma \ref{lem:u-mix-no-gaps} it is not a universal mixture.
\end{proof}

\section{Conclusions}

One of Solomonoff's more famous contributions is the invention
of a theoretically ideal universal induction mechanism. The
universal prior used in this mechanism can be
defined/constructed in several ways.
We clarify the relationships between three different
definitions of universal priors, namely universal mixtures,
Solomonoff priors and universally dominant semimeasures. We
show that the class of universal mixtures and the class of
Solomonoff priors are exactly the same while the class of
universally dominant lower semicomputable semimeasures is a
strictly larger set.

We have identified some aspects of the discrepancy between
Solomonoff priors/universal mixtures and universally dominant
lower semicomputable semimeasures, however a clearer
understanding and characterisation would be of interest.

Since universal dominance is all that is needed to prove
convergence for universal induction
\cite{Hutter:04uaibook,Solomonoff:78} it is interesting to ask
whether the extra properties of the smaller class of Solomonoff
priors have any positive consequences for universal induction.

\subsubsection*{Acknowledgements.}
We would like to acknowledge the contribution of an anonymous
reviewer to a more elegant presentation of the proof of Lemma
\ref{thm:mixture-is-UTM}. This work was supported by ARC grant
DP0988049.

\begin{small}

\end{small}


\begin{thebibliography}{FSW06}

\bibitem[Cha75]{Chaitin:75}
G.~J. Chaitin.
\newblock A theory of program size formally identical to information theory.
\newblock {\em Journal of the ACM}, 22(3):329--340, 1975.

\bibitem[DH10]{Downey:10book}
R.~Downey and D.~R. Hirschfeldt.
\newblock {\em Algorithmic Randomness and Complexity}.
\newblock Springer, Berlin, 2010.

\bibitem[FSW06]{Figueira:06}
S.~Figueira, F.~Stephan, and G.~Wu.
\newblock Randomness and universal machines.
\newblock {\em Journal of Complexity}, 22(6):738--751, 2006.

\bibitem[Hut05]{Hutter:04uaibook}
M.~Hutter.
\newblock {\em Universal Artificial Intelligence: Sequential Decisions based on
  Algorithmic Probability}.
\newblock Springer, Berlin, 2005.

\bibitem[Hut07]{Hutter:07uspx}
M.~Hutter.
\newblock On universal prediction and {B}ayesian confirmation.
\newblock {\em Theoretical Computer Science}, 384(1):33--48, 2007.

\bibitem[Lev71]{Levin:71}
Leonid~A Levin.
\newblock {\em Some Theorems on the Algorithmic Approach to Probability Theory
  and Information Theory}.
\newblock PhD thesis, Moscow University, Moscow, 1971.

\bibitem[LV08]{Li:08}
M.~Li and P.~M.~B. Vit\'anyi.
\newblock {\em An Introduction to {K}olmogorov Complexity and its
  Applications}.
\newblock Springer, Berlin, 3rd edition, 2008.

\bibitem[Sch73]{Schnorr:73}
C.~P. Schnorr.
\newblock Process complexity and effective random tests.
\newblock {\em Journal of Computer and System Sciences}, 7(4):376--388, 1973.

\bibitem[Sol64]{Solomonoff:64}
R.~J. Solomonoff.
\newblock A formal theory of inductive inference: Parts 1 and 2.
\newblock {\em Information and Control}, 7:1--22 and 224--254, 1964.

\bibitem[Sol78]{Solomonoff:78}
R.~J. Solomonoff.
\newblock Complexity-based induction systems: Comparisons and convergence
  theorems.
\newblock {\em IEEE Transactions on Information Theory}, IT-24:422--432, 1978.

\bibitem[Tur36]{Turing:36}
A.~M. Turing.
\newblock On computable numbers, with an application to the
  {E}ntscheidungsproblem.
\newblock {\em Proc. London Mathematical Society}, 2(42):230--265, 1936.

\bibitem[ZL70]{Zvonkin:70}
A.~K. Zvonkin and L.~A. Levin.
\newblock The complexity of finite objects and the development of the concepts
  of information and randomness by means of the theory of algorithms.
\newblock {\em Russian Mathematical Surveys}, 25(6):83--124, 1970.
\end{thebibliography}
\end{document}